\newcommand{\px}{\partial_x}
\newcommand{\py}{\partial_y}
\newcommand{\pz}{\partial_z}
\newcommand{\pxx}{\partial^2_{xx}}
\newcommand{\pxy}{\partial^2_{xy}}
\newcommand{\pyy}{\partial^2_{yy}}
\newcommand{\pyz}{\partial^2_{yz}}
\newcommand{\pxz}{\partial^2_{xz}}
\newcommand{\pzz}{\partial^2_{zz}}
\newcommand{\go}{\nabla_0}
\newcommand{\gp}{\nabla_p}
\newcommand{\gm}{\nabla_m} 
\newcommand{\br}{{\bm{r}}} 
\newcommand{\bx}{\bm{r}_i}
\newcommand{\by}{{\bm{r}_j}}
\newcommand{\rmn}{|\bx-\by|}
\newcommand{\conj}[1]{\overline{#1}}
\newtheorem{theorem}{Theorem}
\newtheorem{proof}{Proof}
\begin{document}

\begin{frontmatter}
\title{RPYFMM: Parallel Adaptive Fast Multipole Method for 
Rotne-Prager-Yamakawa Tensor in Biomolecular Hydrodynamics Simulations}

\author[a]{W. Guan}
\author[b]{X. Cheng}
\author[a]{J. Huang}
\author[e]{G. Huber}
\author[d]{W. Li}
\author[e]{J. A. McCammon}
\author[f]{B. Zhang\corref{author}}

\cortext[author]{
Corresponding author.
\textit{E-mail address:} zhang416@indiana.edu
}
\address[a]{
  Department of Mathematics, University of North Carolina, Chapel
  Hill, NC 27599-3250, USA
}
\address[b]{
  Medicinal Chemistry and Pharmacognosy, College of Pharmacy, Ohio
  State University, Columbus, OH 43210, USA
}
\address[d]{
  School of Transportation and Vehicle Engineering, Shandong
  University of Technology, Zibo, Shandong, China
}
\address[e]{
  Department of Chemistry and Biochemistry and Department of
  Pharmacology, University of California at San Diego, La Jolla, CA
  92093-0365.
}
\address[f]{
  Center for Research in Extreme Scale Technologies, School of
  Informatics, Computing, and Engineering, Indiana University,
  Bloomington, IN, 47404, USA
}

\begin{abstract}
RPYFMM is a software package for the efficient evaluation of the potential field
governed by the Rotne-Prager-Yamakawa (RPY) tensor interactions in biomolecular
hydrodynamics simulations. In our algorithm, the RPY tensor is decomposed as 
a linear combination of four Laplace interactions, each of which is evaluated using 
the adaptive fast multipole method (FMM) \cite{greengard1997new} 
where the exponential expansions are applied to
diagonalize the multipole-to-local translation operators. RPYFMM offers a unified
execution on both shared and distributed memory computers by leveraging the
DASHMM library~\cite{zhang2016dashmm, dashmmrev}. Preliminary numerical results
show that the interactions for a molecular system of $15$ million particles (beads) can
be computed within one second on a Cray XC30 cluster using $12, 288$ cores,
while achieving approximately $54\%$ strong-scaling efficiency.
\end{abstract}

\begin{keyword}
Brownian dynamics;
Rotne-Prager-Yamakawa tensor;
Hydrodynamics interactions;
Fast multipole method;
DASHMM;
Distributed computing
\end{keyword}

\end{frontmatter}

{\bf \noindent PROGRAM SUMMARY}

\begin{small}
\noindent
{\em Program Title:} RPYFMM: Parallel Adaptive FMM for RPY Tensor \\
{\em Licensing provisions:} GNU General Public License, version 3\\
{\em Programming language:} C++ \\
{\em Supplementary material:} \\
{\em Nature of problem:} Evaluate the Rotne-Prager-Yamakawa tensor matrix-vector
multiplications describing the hydrodynamics interaction in biomolecular
systems. \\
{\em Solution method:} The Rotne-Prager-Yamakawa tensor is decomposed as a linear
combination of four Laplace interactions, each of which is evaluated using the new
version of adaptive fast multipole method~\cite{greengard1997new}.  \\
{\em Additional Comments:} RPYFMM is built on top of the DASHMM library and the
Asynchronous Multi-Tasking HPX-5 runtime system. DASHMM is automatically
downloaded during installation and HPX-5 is available at
\url{http://hpx.crest.iu.edu/}.

\end{small}

\section{Introduction}
\label{sec:intro}
The dynamics of macromolecules inside a living cell usually takes place at very
low Reynolds numbers, where the viscous forces dominate over inertial effects,
and the {\it in vivo} macromolecule diffusion is strongly influenced by the
hydrodynamic interactions (HIs). Due to its long-range nature and many-body
character, HI is responsible for a wide variety of fascinating collective
phenomena. Depending on whether HI is present or not, existing studies have
shown qualitative differences in the aggregation or microphase separation of
colloids \cite{Tanaka2000}. In particular, HI was shown to facilitate barrier
crossing during the microphase separation of block copolymers, whereas without
HI, the system appeared to be trapped in a metastable state.  HI was also found
to greatly accelerate the kinetics of lipid membrane self-assembly
\cite{Ando2013}.

There have been many research efforts to develop accurate mathematical models
and efficient simulation tools for understanding the HI effects on biomolecular
dynamics. In this paper, we focus on the bead model \cite{wang2013assessing}, a
particular realization of the commonly used Ermak-McCammon model
\cite{Ermak1978}. In this model, the biomolecular system is treated as a system
of $N$ beads, each representing a Brownian molecule, the translational and
rotational displacement $\Delta \bm{X}_i$ of the $i$th bead during time step
$\Delta t$, due to external force $F_j$ acting on bead $j$, $j=1,\ldots, N$, and
the random displacement $R_i$, is given by
\begin{equation} 
\label{eq:Ermak}
\Delta \bm{X}_i\left( \Delta t \right)  
 = \sum_{j}{}\frac{D_{ij}F_{j}}{k_{B}T}\Delta t 
 \, + \, \sum_{j}{}\frac{\partial D_{ij}}{\partial r_j}\Delta t 
 \, + \, R_i\left(\Delta t\right), 
\end{equation}
where $k_B$ is the Boltzmann constant and $T$ is the absolute temperature.  The
external forces $F$ include the electrostatic and van der Waals
interactions. The hydrodynamic forces are accounted for by the $6N \times 6N$
diffusion tensor $D$ that describes the hydrodynamic coupling between the $N$
beads with their three translational and three rotational degrees of
freedom. The random displacements in $R$ have mean zero and variance matrix
$2D \Delta t$. When the rotational motions of the $N$ beads of radii $a$ are neglected,
the dimension of $D$ is reduced to $3N \times 3N$ and the most common form is
given by the Rotne-Prager-Yamakawa (RPY) tensor
\cite{batchelor1976brownian,Ermak1978} as follows:
\begin{align} 
D_{ii}(\br_i,\br_i) & = \frac{k_BT}{6\pi\eta a}I,\\
D_{ij}(\br_i,\br_j) & = \frac{k_BT}{8\pi\eta r_{ij}} \left [ 
(I + \frac{\br_{ij} \otimes \br_{ij}}{r_{ij}^2}) + 
  \frac{2a^2}{3r_{ij}^2} (I - \frac{3 \br_{ij} \otimes \br_{ij}}{r_{ij}^2})
  \right], \quad i\neq j, r_{ij} \geq 2a \label{eq:rpy-far}\\
D_{ij}(\br_i,\br_j) & = \frac{k_BT}{6\pi\eta r_{ij}} \left [
(1 - \frac{9r_{ij}}{32a}) I + \frac{3}{32a} \frac{\br_{ij} \otimes
    \br_{ij}}{r_{ij}^2} \right], \quad i \neq j, r_{ij} < 2a 
\end{align}
Here $\eta$ is the solvent viscosity, $i$ and $j$ label bead indices, $I$ is the
$3\times 3$ identity matrix, $\br_i = [x_i, y_i, z_i]^T$ is the $3\times 1$
position vector of bead $i$, $\br_{ij} = \br_j - \br_i$, and $r_{ij}
=\|\br_{ij}\|_2$ is the distance between beads $i$ and $j$. In this paper, we
focus on the positive definite RPY tensor. The generalized RPY tensor with both
translational and rotational motions was presented in \cite{wajnryb2013generalization} 
and its efficient evaluation is still an active research topic.

There are several numerical difficulties in solving (\ref{eq:Ermak})
efficiently, including the evaluation of the electrostatic force field
contributing to the external forces $F$, and the algebraic operations on the
dense diffusion matrix $D$. For instance, direct calculation of all the two-body
HI interactions requires $O(N^2)$ operations, and computing hydrodynamically
correlated random displacement vectors $R$ requires prohibitive $O(N^3)$
operations via the Cholesky factorization. For these reasons, in most previous
studies, HI has been either completely neglected or considered only for a much
smaller equivalent sphere system, where the detailed molecular shape was
ignored, and each protein---a large set of beads---was modeled by an equivalent
sphere of the same hydrodynamic radius.

There exist many research efforts aimed at reducing the computational complexity
of solving (\ref{eq:Ermak}), either for the steady state or dynamic settings. A
few representative work include the parallel adaptive fast multipole method for
evaluating the electrostatic potential modeled by the linearized
Poisson-Boltzmann equation on distributed memory computers~\cite{BoAFMPB}, the
Particle-Mesh Ewald (PME) summation method for the matrix-vector multiplication
($DF$) which scales as $O(N\log N)$ \cite{Liu2014}, and different techniques for
the efficient generation of the random vector $R$ in
\cite{Fixman1986,Banchio2003,geyer,Ando2012}.

The purpose of this paper is to present a parallel software package for the
efficient evaluation of the HI interactions ($D\cdot F$) modeled by the RPY
tensor. The package is also essential to the efficient generation of the random
vectors $R$ by approximating $\sqrt{D}v$ with \{$D^kv$\}, $k=0,1,2, \cdots, k$
via the preconditioned Krylov subspace iterations~\cite{liang2013fast}.  
Our numerical algorithm uses the technique introduced in \cite{tornberg2008fast} 
for Stokeslet and \cite{jiang2013fast} for RPY tensor to decompose the RPY tensor 
as a linear combination of four Laplace potentials and their derivatives. 
Each Laplace potential is evaluated using the adaptive new version fast multipole
method~\cite{greengard1997new} where the exponential expansion is used to
diagonalize the multipole-to-local translation operators. The package is built 
on top of the open-source DASHMM library~\cite{zhang2016dashmm, dashmmrev} 
developed by some of the authors, and the Asynchronous
Multi-Tasking HPX-5 runtime, providing a unified execution on both shared and
distributed memory computers. Preliminary numerical results show that for a
molecular system with $15$ million beads, the package is able to compute the HI
within {\bf one} second on a Cray XC30 cluster using $12, 288$ cores and
achieves $54\%$ strong-scaling efficiency.

This paper is organized as follows. Section~\ref{sec:bg} reviews the
mathematical foundations of RPYFMM, including the decomposition of the RPY
tensor and how to compute the gradient and Hessian of the Laplace
potentials. Section~\ref{sec:cs} describes the main components of HPX-5 runtime
system and parallelization strategy of DASHMM library.
Section~\ref{sec:install} provides the installation guide and job examples, and
demonstrates the strong-scaling efficiency for different accuracy
requirements. Section~\ref{sec:conclusion} concludes the paper by discussing
several related research topics in order to build the next generation of
Brownian dynamics simulation package.

\section{Mathematical foundations of RPYFMM} 
\label{sec:bg} 
Similar to the electrostatic interaction, the hydrodynamics interaction (HI) modeled 
by the RPY tensor decays like $O(1/|\bx-\by|)$ as $\|\bx - \by\|_2 \to \infty$ 
and is therefore considered long range. To efficiently evaluate
these long-range hydrodynamics interactions, we apply the technique first
proposed for the Stokeslet~\cite{tornberg2008fast} and later generalized to the
RPY tensor~\cite{jiang2013fast} to decompose the far-field RPY tensor
(\ref{eq:rpy-far}) as a linear combination of four scalar Laplace potentials and
their derivatives. For a target bead $i$ located at
$\bx=[x_i, y_i, z_i]^T$, denote the set of well-separated beads by
$\Omega_i^{far}$, where each bead $j \in \Omega_i^{far}$ is located at
$\by=[x_j, y_j, z_j]^T$ exerting force $F_j = [F_j^1, F_j^2, F_j^3]^T$.
The four Laplace potentials are defined as follows:
\begin{align} 
L_1(\bx) & = \sum_{j \in \Omega_i^{far}} \frac{F_j^1}{\rmn} \\
L_2(\bx) & = \sum_{j \in \Omega_i^{far}} \frac{F_j^2}{\rmn} \\
L_3(\bx) & = \sum_{j \in \Omega_i^{far}} \frac{F_j^3}{\rmn} \\
L_4(\bx) & = \sum_{j \in \Omega_i^{far}} 
\frac{F_j^1 x_j + F_j^2 y_j + F_j^3  z_j}{\rmn} 
= \sum_{j \in \Omega_i^{far}} \frac{F_j \cdot \by}{\rmn} 
\end{align} 
To simplify the notations later, we further define 
\begin{equation}
  L_C(\bx) = \frac{\partial L_1(\bx)}{\partial x_i} +
  \frac{\partial L_2(\bx)}{\partial y_i} 
  + \frac{\partial L_3(\bx)}{\partial z_i}. 
\end{equation}
Using these notations, the far-field HI at target $i$ due to contributions from
$\Omega_i^{far}$ can be collected as
\begin{align} 
\phi_{far} (\bx) &= 
[u_i, v_i, w_i]^T = \sum_{j \in \Omega_i^{far}} D_{ij} (\bx, \by) F_j \nonumber \\
& = \begin{bmatrix} 
C_1 L_1 - C_1 \left(x_i \frac{\partial L_1}{\partial x_i} 
 + y_i \frac{\partial L_2}{\partial x_i} 
 + z_i \frac{\partial L_3}{\partial x_i} \right) + 
C_1 \frac{\partial L_4}{\partial x_i} + 
C_2 \frac{\partial L_C}{\partial x_i} \\
C_1 L_2 - C_1 \left(x_i \frac{\partial L_1}{\partial y_i} 
 + y_i \frac{\partial L_2}{\partial y_i} 
 + z_i \frac{\partial L_3}{\partial y_i} \right) + 
C_1 \frac{\partial L_4}{\partial y_i} + 
C_2 \frac{\partial L_C}{\partial y_i} \\
C_1 L_3 - C_1 \left(x_i \frac{\partial L_1}{\partial z_i} 
 + y_i \frac{\partial L_2}{\partial z_i} 
 + z_i \frac{\partial L_3}{\partial z_i} \right) + 
C_1 \frac{\partial L_4}{\partial z_i} + 
C_2 \frac{\partial L_C}{\partial z_i} \\
\end{bmatrix} \nonumber \\
& = C_1 \begin{bmatrix} L_1 \\ L_2 \\ L_3 \end{bmatrix} - 
C_1 (x_i \nabla L_1 + y_i \nabla L_2 + z_i \nabla L_3)+ 
C_1 \nabla L_4 + C_2 \nabla L_C \label{eq2.8} 
\end{align} 
where $C_1=\frac{k_BT}{8\pi\eta}$ and $C_2=\frac{k_BTa^2}{12\pi\eta}$.
Notice that to compute the HI in (\ref{eq2.8}), one has to compute the
three Laplace potentials $L_1$, $L_2$, $L_3$, the gradients $\nabla
L_1$, $\nabla L_2$, $\nabla L_3$, $\nabla L_4$, and the Hessians of
$L_1$, $L_2$, and $L_3$ implicitly expressed in $\nabla L_C$.

In the FMM, far-field potentials are collected in the form of {\it multipole} or
{\it local} expansions. In RPYFMM, the multipole ($\mathbf{M}$) and local
($\mathbf{L}$) expansions for the Laplace potential are of the form 
\[
\mathbf{M} = 
\sum_{n=0}^p\sum_{m=-n}^n \frac{M_n^m Y_n^m}{r^{n+1}} \quad 
\text{and} \quad 
\mathbf{L} = \sum_{n=0}^p \sum_{m=-n}^n L_n^m Y_n^m r^n, 
\]
where the spherical harmonics $Y_n^m$ is defined as 
\[
Y_n^m (\theta, \phi)  = \sqrt{\frac{(n - |m|)!}{(n + |m|)!}}
P_n^{|m|}(\cos \theta) e^{im \phi} = C_n^m P_n^{|m|} e^{im\phi} 
\]
Under this definition, it is easy to verify that $M_n^{-m} = \conj{M_n^m}$ and
$L_n^{-m} = \conj{L_n^m}$ and one only saves the coefficients for $0 \leq m \leq
n$ in the implementation.

To compute the gradient and Hessian of the above multipole (local)
expansion, one defines operators $\go=\pz$, $\gp=\px+i\py$, and
$\gm=\px-i\py$. If $\phi$ is a harmonic function, then 
\[
\gm\gp \phi= \gp\gm \phi = -\go\go\phi.
\]
When the spherical harmonics follow the conventional definition
\[
Y_n^m(\theta, \phi) = \sqrt{\frac{2n+1}{4\pi}}
\sqrt{\frac{(n - m)!}{(n + m)!}} P_n^m (\cos\theta) e^{im\phi}, \quad
Y_n^{-m} = (-1)^m \conj{Y_n^m}, \quad 0 \leq m \leq n,
\]
the following relations hold 
\begin{align}
\go\left(\frac{Y_n^m}{r^{n+1}}\right) & = 
- \sqrt{\frac{(n + m + 1)(n - m + 1)}{(2n + 1)(2n + 3)}} 
(2n+1) \frac{Y_{n+1}^m}{r^{n+2}} \\
\gp\left(\frac{Y_n^m}{r^{n+1}}\right) & = 
\sqrt{\frac{(n + m + 2)(n + m + 1)}{(2n+1)(2n+3)}} 
(2n+1) \frac{Y_{n+1}^{m+1}}{r^{n+2}} \\
\gm\left(\frac{Y_n^m}{r^{n+1}}\right) & = 
-\sqrt{\frac{(n - m + 2)(n - m + 1)}{(2n + 1)(2n+3)}}
(2n+1) \frac{Y_{n+1}^{m-1}}{r^{n+2}}\\
\go\left(Y_n^m r^n\right) & = 
\sqrt{\frac{(n+m)(n - m)}{(2n-1)(2n+1)}}(2n+1)r^{n-1}Y_{n-1}^m \\
\gp\left(Y_n^m r^n\right) & = 
\sqrt{\frac{(n - m - 1)(n - m)}{(2n-1)(2n+1)}} (2n+1)r^{n-1}Y_{n-1}^{m+1} \\
\gm\left(Y_n^m r^n\right) & = 
-\sqrt{\frac{(n+m-1)(n+m)}{(2n-1)(2n+1)}} (2n+1) r^{n-1} Y_{n-1}^{m-1}. 
\end{align}
As the Laplace potential is real, the gradient and Hessian of the
multipole (local) expansion can be obtained
\begin{align}
  & \pz \leftarrow \go, \quad \px \leftarrow \Re{\gp}, \quad \py
  \leftarrow \Im{\gp} \nonumber \\
  & \pxz \leftarrow \Re{\gp\go}, \quad 
  \pyz \leftarrow \Im{\gp \go}, \quad
  \pzz \leftarrow \go\go, \quad 
  \pxy \leftarrow \Im{\gp\gp}/2\nonumber \\
  & 
  \pxx \leftarrow \Re{(\gp\gp - \go\go)}/2, \quad 
  \pyy \leftarrow \Re{( -\gp\gp - \go\go)}/2
\end{align} 

\begin{theorem}
Let $\mathbf{M} = \sum_{n=0}^p \sum_{m=-n}^n 
\frac{M_n^m C_n^m P_n^m e^{im\phi}}{r^{n+1}}$ be the multipole expansion. Then 
\begin{align*}
& \go \mathbf{M} = s_1 + 2 \Re{s_2}, \quad \gp \mathbf{M} = s_3 + \conj{s_4}, \\
& \gp\go \mathbf{M} = s_5+ \conj{s_6}, \quad 
\gp\gp \mathbf{M} = s_7+ \conj{s_8}, \quad  
\go\go \mathbf{M} = s_{9} + 2\Re{s_{10}}, 
\end{align*}
where
\begin{align*}
s_1 & = - \sum_{n=0}^p \frac{M_n^0 (n+1) P_{n+1}^0}{r^{n+2}} \\
s_2 & = - \sum_{n=1}^p \sum_{m=1}^n \frac{M_n^m P_{n+1}^m e^{im\phi}}{r^{n+2}} 
\sqrt{\frac{(n-m+1)!(n-m+1)!}{(n-m)!(n+m)!}} \\
s_3 & = \sum_{n=0}^p \sum_{m=0}^n \frac{M_n^m P_{n+1}^{m+1}e^{i(m+1)\phi}}{r^{n+2}} 
\sqrt{\frac{(n-m)!}{(n+m)!}} \\
s_4 & = -\sum_{n=1}^p \sum_{m=1}^n \frac{M_n^m P_{n+1}^{m-1}e^{i(m-1)\phi}}{r^{n+2}}
\sqrt{\frac{(n-m+2)!(n-m+2)!}{(n-m)!(n+m)!}}\\
s_5 & = -\sum_{n=0}^p \sum_{m=0}^n 
\frac{M_n^m P_{n+2}^{m+1} e^{i(m+1)\phi}}{r^{n+3}} 
\sqrt{\frac{(n-m+1)!(n-m+1)!}{(n-m)!(n+m)!}}\\
s_6 & = \sum_{n=1}^p \sum_{m=1}^n \frac{M_n^mP_{n+2}^{m-1}e^{i(m-1)\phi}}{r^{n+3}} 
\sqrt{\frac{(n-m+3)!(n-m+3)!}{(n-m)!(n+m)!}} \\
s_7 & = \sum_{n=0}^p \sum_{m=0}^n 
\frac{M_n^m P_{n+2}^{m+2} e^{i(m+2)\phi}}{r^{n+3}}
\sqrt{\frac{(n-m)!}{(n+m)!}}\\
s_8 & = \sum_{n=2}^p\sum_{m=2}^n 
\frac{M_n^m P_{n+2}^{m-2}e^{i(m-2)\phi}}{r^{n+3}} 
\sqrt{\frac{(n-m+4)!(n-m+4)!}{(n-m)!(n+m)!}}  \\
& \qquad - \sum_{n=1}^p \frac{M_n^1 P_{n+2}^1 e^{-i\phi}}{r^{n+3}} 
\sqrt{\frac{(n+1)!}{(n-1)!}}\\
s_{9} & = \sum_{n=0}^p \frac{M_n^0 P_{n+2}^0 (n+1)(n+2)}{r^{n+3}}\\
s_{10} & = \sum_{n=1}^p \sum_{m=1}^n \frac{M_n^m P_{n+2}^m e^{im\phi}}{r^{n+3}}
\sqrt{\frac{(n-m+2)!(n-m+2)!}{(n-m)!(n+m)!}}.
\end{align*}
\end{theorem}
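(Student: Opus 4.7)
The plan is to derive each of the five identities termwise by applying the single-harmonic operator rules displayed just before the theorem and then regrouping the resulting sums with the help of the reality relation $M_n^{-m}=\conj{M_n^m}$. A preliminary calculation rewrites those rules in the ``seminormalized'' convention $Y_n^m=C_n^m P_n^{|m|}e^{im\phi}$ used by $\mathbf{M}$: the $\sqrt{(2n+1)/(4\pi)}$ factors relating the two conventions telescope, yielding, for $m\ge 0$, the cleaner identities $\go(Y_n^m/r^{n+1})=-\sqrt{(n{+}m{+}1)(n{-}m{+}1)}\,Y_{n+1}^m/r^{n+2}$ and $\gp(Y_n^m/r^{n+1})=\sqrt{(n{+}m{+}2)(n{+}m{+}1)}\,Y_{n+1}^{m+1}/r^{n+2}$, together with the analogous rule for $\gm$. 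Absorbing the $C_{n+1}^m$ or $C_{n+2}^m$ factor to turn $Y_{n+1}^m$ or $Y_{n+2}^m$ back into $P_{n+1}^m$ or $P_{n+2}^m$ then produces the symmetric square-root coefficients of $s_2,\ldots,s_{10}$ via the routine factorial expansion $(n-m+k)!=(n-m+k)(n-m+k-1)\cdots(n-m+1)(n-m)!$.

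Once the single-harmonic rules are in hand, the real operator $\go$ is the easiest case. Writing $\mathbf{M}=\sum_{n=0}^p M_n^0 P_n^0/r^{n+1}+2\sum_{n=1}^p\sum_{m=1}^n\Re(M_n^m C_n^m P_n^m e^{im\phi}/r^{n+1})$ and noting that $\go=\partial_z$ commutes with $\Re$, the zonal ($m=0$) piece reproduces $s_1$ and the $m\ge 1$ piece reproduces $2\Re(s_2)$. Iterating yields $\go\go\mathbf{M}=s_9+2\Re(s_{10})$; the coefficient $(n+1)(n+2)$ of $s_9$ is just the $m=0$ specialization of the symmetric square root appearing in $s_{10}$.

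The complex operators $\gp$, $\gp\go$, $\gp\gp$ do not commute with $\Re$, so I would apply them directly to $\sum_{n,m}M_n^m Y_n^m/r^{n+1}$ and split the inner sum into its $m\ge 0$ and $m\le -1$ halves. The $m\ge 0$ half is handled directly by the rules above and produces $s_3$, $s_5$, $s_7$. For $m=-k\le -1$ I would invoke $Y_n^{-k}=\conj{Y_n^k}$ together with the elementary identity $\gp\conj{f}=\conj{\gm f}$, a consequence of $\gm$ and $\gp$ being complex-conjugate differential operators with real coefficients, to rewrite each $m=-k$ contribution as the conjugate of a term to which the $\gm$-rule applies. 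Combined with $M_n^{-k}=\conj{M_n^k}$ and an overall extraction of the outer conjugate, this produces $\conj{s_4}$, $\conj{s_6}$, and the main double sum of $\conj{s_8}$.

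The main obstacle is the extra single sum inside $s_8$. Because $\gp\gp$ raises the azimuthal index by $2$, the starting index $m=-1$ is the unique value whose $(m,m+2)$ orbit crosses from negative to positive, landing at $m=+1$; it therefore belongs neither to the $m\ge 0$ half that generates $s_7$ nor to the $m\le -2$ half that generates the main sum of $\conj{s_8}$. A direct computation using the rules twice gives $\gp\gp(Y_n^{-1}/r^{n+1})=-\sqrt{n(n+1)}\,P_{n+2}^1 e^{i\phi}/r^{n+3}$; after multiplying by $M_n^{-1}=\conj{M_n^1}$ and observing $\sqrt{n(n+1)}=\sqrt{(n+1)!/(n-1)!}$, the contribution is exactly the conjugate of the isolated subtracted sum in $s_8$. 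Verifying that no such crossing term appears for $\gp$ or $\gp\go$ — whose $m$-shift by $1$ sends the boundary input $m=-1$ to the self-conjugate case $m=0$, which is already absorbed by the $m=1$ term of $\conj{s_4}$ or $\conj{s_6}$ — completes the bookkeeping.
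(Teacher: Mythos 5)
Your proposal is correct and follows essentially the same route as the paper's proof: apply the single-harmonic ladder identities termwise (with the $\sqrt{(2n+1)/4\pi}$ normalization factors telescoping away), treat the negative-order terms via $M_n^{-m}=\conj{M_n^m}$ together with $\gp\conj{f}=\conj{\gm f}$, and collect the results into the stated sums. You simply carry out in full the cases the paper dismisses as ``similar algebraic work'' (the paper only details $\gp\go$), including the correct treatment of the $m=-1$ crossing term that produces the isolated subtracted sum in $s_8$.
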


\begin{proof}
We show the result for $\gp\go$. Apply $\gp\go$ on each term of the
multipole expansion. When $m \geq 0$, 
\begin{align*}
& \gp \go \left(\frac{M_n^m C_n^m P_n^m e^{im\phi}}{r^{n+1}} \right) \\
= & -\gp \left (M_n^m \sqrt{\frac{4\pi}{2n+1}}
\sqrt{\frac{(n+m+1)(n-m+1)}{(2n+1)(2n+3)}} (2n+1) \frac{Y_{n+1}^m}{r^{n+2}} 
\right) \\
= & - \frac{M_n^m P_{n+2}^{m+1} e^{i(m+1)\phi}}{r^{n+3}} 
\sqrt{\frac{(n-m+1)!(n-m+1)!}{(n-m)!(n+m)!}}
\end{align*}
When $m \geq 1$, 
\[
\gp \go (\frac{\conj{M_n^m} C_n^m P_n^m e^{-im\phi}}{r^{n+1}})
= \gp \conj{\go \frac{{M_n^m} C_n^m P_n^m e^{im\phi}}{r^{n+1}}}
= \conj{\gm \go \frac{{M_n^m} C_n^m P_n^m e^{im\phi}}{r^{n+1}}}
\]
Similar algebraic work gives 
\[
\gm\go (\frac{M_n^m C_n^m P_n^m e^{im\phi}}{r^{n+1}})
=  \frac{M_n^mP_{n+2}^{m-1}e^{i(m-1)\phi}}{r^{n+3}} 
\sqrt{\frac{(n-m+3)!(n-m+3)!}{(n-m)!(n+m)!}} 
\]
The results for the other operators can be obtained with similar 
algebraic work. 
\end{proof} 

One can similarly obtain the following result for the local
expansion. 
\begin{theorem}
Let $\mathbf{L} = \sum_{n=0}^p \sum_{m=-n}^n L_n^m C_n^m P_n^m e^{im\phi}r^n$ 
be the local expansion. Then 
\begin{align*}
& \go \mathbf{L} = s_1 + 2\Re{s_2}, \quad \gp \mathbf{L} = s_3 +
  \conj{s_4} \\
& \gp\go \mathbf{L} = s_5+ \conj{s_6}, \quad 
\gp\gp \mathbf{L} = s_7+ \conj{s_8}, \quad 
\go\go \mathbf{L} = s_9 + 2\Re{s_{10}}, 
\end{align*}
where 
\begin{align*}
s_1 & = \sum_{m=1}^p L_n^0 n r^{n-1} P_{n-1}^0 \\
s_2 & = \sum_{n=1}^p \sum_{m=1}^{n-1} L_n^m r^{n-1} P_{n-1}^m
e^{im\phi} \sqrt{\frac{(n+m)!(n-m)!}{(n+m-1)!(n+m-1)!}} \\
s_3 & = \sum_{n=2}^p \sum_{m=0}^{n-2} L_n^m r^{n-1} P_{n-1}^{m+1}
e^{i(m+1)\phi} \sqrt{\frac{(n-m)!}{(n+m)!}} \\
s_4 & = -\sum_{n=1}^p\sum_{m=1}^n L_n^m r^{n-1} P_{n-1}^{m-1}
e^{i(m-1)\phi} \sqrt{\frac{(n+m)!(n -m)!}{(n+m-2)!(n+m-2)!}} \\
s_5 &= \sum_{n=3}^p \sum_{m=0}^{n-3} L_n^m P_{n-2}^{m+1} e^{i(m+1)\phi}r^{n-2} 
\sqrt{\frac{(n-m)!(n+m)!}{(n+m-1)!(n+m-1)!}}\\
s_6 & = \sum_{n=2}^p \sum_{m=1}^{n-1} L_n^m P_{n-2}^{m-1} e^{i(m-1)\phi} r^{n-2}
\sqrt{\frac{(n-m)!(n+m)!}{(n+m-3)!(n+m-3)!}}\\
s_7 &= \sum_{n=4}^p\sum_{m=0}^{n-4} L_n^m P_{n-2}^{m+2} e^{i(m+2)\phi} r^{n-2} 
\sqrt{\frac{(n-m)!}{(n+m)!}}\\
s_8 & = \sum_{n=2}^p \sum_{m=2}^n L_n^m P_{n-2}^{m-2} e^{i(m-2)\phi} r^{n-2} 
\sqrt{\frac{(n+m)!(n-m)!}{(n+m-4)!(n+m-4)!}} \\
& \qquad -
\sum_{n=3}^p L_n^1 P_{n-2}^1 e^{-i\phi} r^{n-2} \sqrt{\frac{(n+1)!}{(n-1)!}}\\
s_9 & = \sum_{n=2}^p L_n^0 n(n-1)  P_{n-2}^0 r^{n-2} \\
s_{10} & = \sum_{n=2}^p \sum_{m=1}^{n-2} L_n^m r^{n-2} P_{n-2}^m e^{im\phi} 
\sqrt{\frac{(n+m)!(n-m)!}{(n+m-2)!(n+m-2)!}}
\end{align*}
\end{theorem}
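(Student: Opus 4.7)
The plan is to mirror the proof of Theorem 1, but to substitute the local-expansion differentiation rules $\go(Y_n^m r^n)$, $\gp(Y_n^m r^n)$, $\gm(Y_n^m r^n)$ in place of their multipole counterparts. Exactly as in the first theorem, I would begin by writing $\mathbf{L}$ as a sum over $m=0$, $m>0$ and $m<0$, using the reality condition $L_n^{-m} = \conj{L_n^m}$ to fold the negative-$m$ portion into a conjugate of the positive-$m$ portion. Applying one of the derivative operators to $L_n^m C_n^m P_n^m e^{im\phi} r^n$ for $m \geq 0$ will reproduce the real (or ``plain'') piece $s_1$, $s_3$, $s_5$, $s_7$, or $s_9$. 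The conjugate piece $s_2$, $s_4$, $s_6$, $s_8$, or $s_{10}$ comes from applying the identities
\[
\gp \overline{(\cdot)} = \overline{\gm(\cdot)}, \qquad
\gp\go \overline{(\cdot)} = \overline{\gm\go(\cdot)}, \qquad
\gp\gp \overline{(\cdot)} = \overline{\gm\gm(\cdot)}
\]
to the $m \geq 1$ terms, exactly as in Theorem 1's handling of $\gp\go$ on $\conj{M_n^m}\cdots e^{-im\phi}/r^{n+1}$.

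The computational core is a single-term calculation for each operator. For example, for $\gp\go$ one applies the rule $\go(Y_n^m r^n) = \sqrt{(n+m)(n-m)/((2n-1)(2n+1))}(2n+1) r^{n-1} Y_{n-1}^m$ followed by $\gp(Y_{n-1}^m r^{n-1}) = \sqrt{(n-m-2)(n-m-1)/((2n-3)(2n-1))}(2n-1) r^{n-2} Y_{n-2}^{m+1}$, then converts back from $Y_{n-1}^m, Y_{n-2}^{m+1}$ to $C_{n-1}^m P_{n-1}^m$, $C_{n-2}^{m+1} P_{n-2}^{m+1}$ notation. The square-root factor collapses to $\sqrt{(n-m)!(n+m)!/((n+m-1)!)^2}$ after the $C$ factors are absorbed, matching $s_5$. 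The same pattern, with appropriate degree shifts $n \mapsto n-1$ or $n \mapsto n-2$, gives $s_1,s_3,s_7,s_9$, and the conjugated $\gm$-branch gives $s_2,s_4,s_6,s_8,s_{10}$.

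The main obstacle, and the only nontrivial part, is bookkeeping the summation limits. Because each operator lowers the polar index, terms with small $n$ or $m$ produce $P_{n-1}^m$ or $P_{n-2}^{m\pm 2}$ that vanish or require relabeling, which is why $s_5$ starts at $n=3$, $s_7$ at $n=4$, and $s_8$ contains the extra correction $-\sum L_n^1 P_{n-2}^1 e^{-i\phi}r^{n-2}\sqrt{(n+1)!/(n-1)!}$. These boundary pieces arise from the $m=1$ contribution in $\gm\gm$, where $e^{-i\phi}$ appears instead of $e^{i(m-2)\phi}$ for $m\geq 2$, and they must be tracked carefully when merging the positive and conjugate branches. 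Apart from this index accounting, the proof is routine algebra on the factorial coefficients and proceeds exactly as Theorem 1.
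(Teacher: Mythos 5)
Your proposal is correct and follows essentially the same route as the paper, which proves Theorem 1 by splitting off the $m\ge 1$ conjugate terms via $\gp\conj{(\cdot)}=\conj{\gm(\cdot)}$ and then states that Theorem 2 follows by the same algebra with the $\go(Y_n^m r^n)$, $\gp(Y_n^m r^n)$, $\gm(Y_n^m r^n)$ rules; your worked $\gp\go$ term and its coefficient collapse to $\sqrt{(n-m)!(n+m)!/((n+m-1)!)^2}$ match $s_5$, and your account of the index bookkeeping (including the $m=1$ boundary term in $s_8$) is consistent with the paper's treatment.
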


We point out that there are at least two approaches to computing (\ref{eq2.8})
at each target bead $i$. In the first approach, for each Laplace potential, one
differentiates the multipole or local expansion, evaluates the potential,
gradients, and Hessian at $\bx$, and accumulates the result. In the second
approach, one simplifies each component in (\ref{eq2.8}) into a single expansion
and then carries out the evaluation. Compared with the first approach, the
second approach performs fewer expansion evaluation but consumes more storage to
assemble the final expansion. In RPYFMM, the current implementation adopts the
first approach.

\section{HPX-5 runtime and DASHMM library}
\label{sec:cs} 

\subsection{HPX-5 runtime}
\label{subsec:runtime} 
HPX-5 (High Performance ParalleX) is an experimental Asynchronous Multi-Tasking
(AMT) programming model and runtime developed at Indiana
University~\cite{kulkarni16, kissel16}. Its design is governed by the ParalleX
exascale execution model~\cite{Cimini:2016ab} and it aims to enable programs to
run unmodified on systems from a single SMP to large clusters and supercomputers
with thousands of nodes.

HPX-5 defines a broad API that covers most aspects of the system. Programs are
organized as diffusive, message driven computation, consisting of a large number
of lightweight threads and active messages, executing within the context of a
global address space, and synchronizing through the use of lightweight
synchronization objects. The HPX-5 runtime is responsible for managing global
allocation, address resolution, data and control dependence, and scheduling
threads and the network.

The HPX-5 \emph{global address space} provides a global shared memory space
abstraction and serves as the basis for computation. Global allocation is
performed through a set of dynamic allocators that provide individual, block
cyclic, and user-defined allocation for blocks of memory. Access to data in the
global address space is provided through an asynchronous \texttt{memput/memget}
API. Explicit global address translation can be performed in order to operate on
local machine virtual aliases. Finally, raw global addresses serve as the
targets for HPX-5's active message \emph{parcels}, described below. Localities
(roughly equivalent to MPI processes) are mapped into the global address space
and can be accessed through indices allowing messages to target localities as in
other active message runtimes.

Parcels form the basis of parallel computation in HPX-5. They contain a
description of the action to be performed, argument data, and continuation
information  and are sent to the global address
on which the action is to be performed. The HPX-5 scheduler invokes parcels as
lightweight threads once they reach their destination. This parcel--thread
equivalence is key to abstracting the difference between shared and distributed
execution in HPX-5. Sending a parcel is equivalent to, and the only means of,
spawning a lightweight thread. In shared memory execution it just happens that
all target addresses are on a single locality. Unlike many other AMT runtimes,
HPX-5 is designed around cooperative threading and not simply run to completion
tasks. 

Program data and control dependencies are represented in memory by \emph{local
  control objects} (LCOs). An LCO is an event-driven, lightweight, globally
addressable synchronization object that co-locates data and control
information. All LCOs have input slots, predicates that evaluate functions of
the inputs and may determine that an LCO has been triggered, and continuations
(i.e., dependent threads and parcels) that will be executed once the LCO is
triggered. This allows the user to build fully dynamic dataflow networks managed
by the runtime. A simple example of an LCO is a reduction that performs a sum
across its inputs. HPX-5 is delivered with a number of classes of built-in LCOs,
e.g., futures and reduction types, and permits user-defined LCO classes as
well. A user-defined LCO encodes the data that it represents, the task performed
when an input becomes available, and the predicate under which the LCO is
considered to be triggered.

\subsection{DASHMM} 
\label{subsec:dashmm} 

The Dynamic Adaptive System for Hierarchical Multipole Methods is an open-source
scientific software library built on top of HPX-5 runtime system that aims to
provide an easy-to-use system that can provide scalable, efficient, and unified
evaluations of general HMMs on both shared and distributed memory
architectures. Unlike conventional practice in many existing MPI+X
implementations, which use static partitioning of the global tree structure and
bulk-synchronous communication of the locally essential tree \cite{Warren1992,
  Ying2003, Kurzak2005a}, DASHMM considers the distribution of the directed
acyclic execution graph, represented implicitly as a network of expansion
objects. By leveraging the LCO construct in HPX-5, the expansion object not only
encloses data, but also encodes dependency and continuation. The execution of
DASHMM is therefore completely data-driven and asynchronous. 

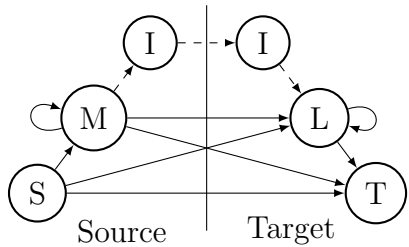
\begin{figure} 
\centering
\begin{tikzpicture}[>=latex, xscale=0.75]
\node at (1.5, -0.5) {Source};
\node at (4.5, -0.5) {Target}; 

\begin{scope}[nodes={draw, thick}]
\draw (3, -0.5) to (3, 2.5); 
\node at (0,0) [circle] (s) {S}; 
\node at (1,1) [circle] (m) {M}; 
\node at (2,2) [circle] (is) {I}; 
\node at (4,2) [circle] (it) {I}; 
\node at (5,1) [circle] (l) {L}; 
\node at (6,0) [circle] (t) {T}; 
\end{scope}

\draw[->] (s) to (t); 
\draw[->] (s) to (m); 
\draw[->] (s) to (l); 
\draw[->] (m) to (l); 
\draw[->] (l) to (t); 
\draw[->] (m) to (t); 
\draw[->, loop left] (m) to (m); 
\draw[->, loop right] (l) to (l); 

\draw[->, dashed] (m) to (is); 
\draw[->, dashed] (is) to (it); 
\draw[->, dashed] (it) to (l); 
\end{tikzpicture} 
\caption{Operator diagrams in DASHMM. Basic multipole methods use
  multipole (M) and local (L) expansions, and eight operators (shown
  in solid lines) that connect them to the sources (S) and targets
  (T). Advanced multipole methods use intermediate expansions (I) and
  three additional operators (shown in dashed lines). The M-to-L
  operation is decomposed into a chain of M-to-I, I-to-I, and I-to-L
  operations in advanced multipole methods.} 
\label{fig:op} 
\end{figure} 

DASHMM can be used for a wide variety of applications, and it does so with an
easy-to-use and general interface. Further, DASHMM's interface is HPX-5
oblivious, meaning that no knowledge of HPX-5 is required to use DASHMM in
either its basic or advanced forms. The basic interface allows users to rapidly
apply the built in methods and kernels in end-science applications. DASHMM
currently provides three built in multipole methods: Barnes-Hut
\cite{Barnes1986}, the classic fast multipole method~\cite{greengard1987fast,
  Carrier1988}, and a variant of FMM that uses the exponential expansion to
reduce arithmetic complexity~\cite{greengard1997new}.  The last method is
referred to as {\tt FMM97} in DASHMM. DASHMM also provides three built in
kernels: the scaling invariant Laplace potential, the scaling variant Yukawa
potential, and the oscillatory Helmholtz potential in low-frequency regime.
The advanced interface allows users to implement variants of the multipole
method, new kernels, or new distribution policies that guide the placement of
expansion data. 

The simplicity and generality of DASHMM are achieved by adopting two
strategies. First, DASHMM uses C++ templates to accept {\tt SourceData}, {\tt
  TargetData}, {\tt Method}, and {\tt Expansion} as parameters for automatic
generalization over user-specified types. Second, DASHMM extends the Expansion
abstraction from its mathematical counterpart and introduces three additional
operators. Each expansion object in DASHMM is either a normal expansion that
corresponds to the usual multipole/local expansion or an intermediate expansion
that is needed for advanced multipole methods (see
Figure~\ref{fig:op}). Additionally, each expansion object  in DASHMM is a
collection of views, each of which is a mathematical approximation stored in
compact form. Together, they facilitate the implementation of various multipole
and multipole-like methods and concurrent kernel evaluation on the same input
data.

\section{Software Installation and Numerical Examples} 
\label{sec:install}

\subsection{Installation} 
RPYFMM depends on two external libraries: the HPX-5 runtime system,
and the DASHMM library. The current version of RPYFMM depends on
version 4.1.0 of HPX-5 or later, which can be downloaded from
\url{https://hpx.crest.iu.edu/download.}  DASHMM is automatically
downloaded in the RPYFMM when the library is built.

Users must first install HPX-5 on their systems. HPX-5 can be built
without or with network transports. For the latter, HPX-5 currently
specifies two network interfaces: the ISend/IRecv (ISIR) interface
with the MPI transport, and Put-with-Completion (PWC) interface with
the Photon transport. Assume that you have unpacked the HPX-5 source
into the folder {\tt /path/to/hpx} and want to install HPX-5 into {\tt
  /path/to/install}. The following steps build and install HPX-5. 
\begin{verbatim}
> cd /path/to/hpx
> % without network transport
> ./configure --prefix=/path/to/install 
> % with ISIR interface 
> % ./configure --prefix=/path/to/install --enable-mpi 
> % with PWC interface 
> % ./configure --prefix=/path/to/install --enable-mpi --enable-photon
> % ./configure --prefix=/path/to/install --enable-pmi --enable-photon
> make 
> make install
\end{verbatim}
The {\tt --enable-mpi} or {\tt --enable-pmi} option for the PWC network is used
to build support for {\tt mpirun} or {\tt aprun} bootstrapping because HPX-5 
does not provide its own distributed job launcher. Please see the official
documentation for more detailed installation instructions for certain Cray
machines. To finish the setup for HPX-5, one modifies the following environment
variables 

\begin{verbatim}
export PATH=/path/to/install/bin:$PATH
export LD_LIBRARY_PATH=/path/to/install/lib:$LD_LIBRARY_PATH
export PKG_CONFIG_PATH=/path/to/install/lib/pkgconfig:$PKG_CONFIG_PATH
\end{verbatim}

Once HPX-5 is installed, assume the RPYFMM is unpacked in directory {\tt
  /path/to/rpyfmm}, will be built in directory {\tt /path/to/rpyfmm/build}, and
be installed into directory {\tt /path/to/rpyfmm/install}, the library can be
built in the following steps using CMake, version 3.4 or higher 
 
\begin{verbatim}
> cd /path/to/rpyfmm/build
> cmake ../ -DCMAKE_INSTALL_PREFIX=/path/to/rpyfmm/install
> make 
> make install
\end{verbatim}
 
\subsection{Example} 
Included with RPYFMM is a test code that demonstrates a simple use of the
library. This code is given in {\tt /path/to/rpyfmm/demo/.} The demonstration
code is not built/installed by {\tt make install}. To build it, run {\tt make
  demo} in {\tt /path/to/rpyfmm/build/demo/.} User can request a summary of the
options to the demo code by running the code with {\tt --help} as a command line
argument. In the following, we provide some walk through of the demo code. 

The basic usage of RPYFMM is through an {\tt Evaluator} object of the DASHMM 
library. The {\tt Evaluator} object is a template over four types: the source
type, the target type, the expansion type and the method type. For example, 
\begin{verbatim} 
dashmm::Evaluator<Source, Target, 
                  dashmm::RPY, dashmm::FMM97> rpy_eval{} 
\end{verbatim} 
declares an {\tt Evaluator} for the RPY kernel using the advanced FMM method for
two user-defined types {\tt Source} and {\tt Target}. The minimum requirements
for the {\tt Source} and {\tt Target} types are 
\begin{verbatim}
struct Source {
  dashmm::Point position;
  double q[3];            // "charges" of the source 
  //...
}; 

struct Target{
  dashmm::Point position; 
  double value[3];        // store the results
  //...
}; 
\end{verbatim} 
Users can declare the object {\tt Evaluator} object with a single type if that
type satisfies both the minimum requirements for {\tt Source} and {\tt Target}
(see type {\tt Bead} of the demo code). 

There are four parameters associated with the RPY kernel, specifying the radius
of the bead, the Boltzmann constant, the absolute temperature, and the solvent
viscosity. This information needs to be passed to the library by declaring 
\begin{verbatim} 
std::vector<double> kparams(1, radius); 
\end{verbatim} 
which uses the default values for the rest, or 
\begin{verbatim}
std::vector<double> kparams(4);
kparams[0] = raidus; 
kparams[1] = boltzmann_constant;
kparams[2] = temperature;
kparams[3] = viscosity
\end{verbatim} 

Finally, an evaluation of the RPY kernel on a set of source and target points 
can be completed by instantiating the {\tt FMM97} method and calling the {\tt
  evaluate} member function of the {\tt Evaluator} object. 
\begin{verbatim}
dashmm::FMM97<Bead, Bead, dashmm::RPY> fmm97{}; 
err = rpy_eval.evaluate(bead_handle, bead_handle, threshold, 
                        fmm97, accuracy, &kparams); 
\end{verbatim}
where {\tt threshold} specifies the refinement limit (max number of
beads allowed in a childless leaf node). 

\subsection{Numerical Results} 
We demonstrate the performance of the RPYFMM library using the demo code,
particularly focusing on the resulting scalability. The tests were performed on
a Cray XC30 cluster at Indiana University, running Linux kernel
3.0.101-0.47-102. Each compute node has two Intel Xeon
E5-2650 v3 processors at 2.3 GHz clock rate and 64 GB of DDR3 RAM. All compute
nodes are connected through the Cray Aries interconnect. The RPYFMM library and
the demo code were compiled using GNU compiler 6.2.0 with {\tt -O3} 
optimization flags. The configuration of the tests can be summarized as follows:

\begin{itemize} 
\item Two data distributions are tested: (a) Uniform distribution inside a cube
  and (b) Uniform distribution on a spherical surface. 
\item The problem size is 15 million for a cube distribution and 8 million for
  a sphere distribution. 
\item Three, six, and nine digits of accuracy are tested. The refinement limit
  for three, six, and nine digits are 80, 100, and 120, respectively. 
\item Tests requiring three digits of accuracy start from one compute node. 
  Tests with six and nine digits of accuracy start from two compute
  nodes. Tests use up to 512 compute nodes for strong scaling evaluation. 
\item All tests are repeated five times and the average is reported here. 
\end{itemize}

The accuracy results of the tests are given in Table~\ref{tab:rpyfmm}. They were
computed according to formula \cite[Eq.(57)]{Cheng1999} at 400 randomly selected
points. The scaling results of the tests are summarized in
Figure~\ref{fig:rpyfmm}. At $12,288$ cores using 512 compute nodes, RPYFMM is
able to compute both problem within one second at three-digit accuracy
requirement. 

\begin{table} 
\centering
\begin{tabular}{cll}
\hline
& Cube & Sphere \\
\hline
3-digit & 2.1410e-3 & 2.1420e-3\\
\hline
6-digit & 1.4115e-7 & 1.3949e-7 \\
\hline
9-digit & 2.6447e-9 & 2.6347e-9 \\
\hline 
\end{tabular}
\caption{Accuracy results of the RPYFMM.}
\label{tab:rpyfmm}
\end{table}

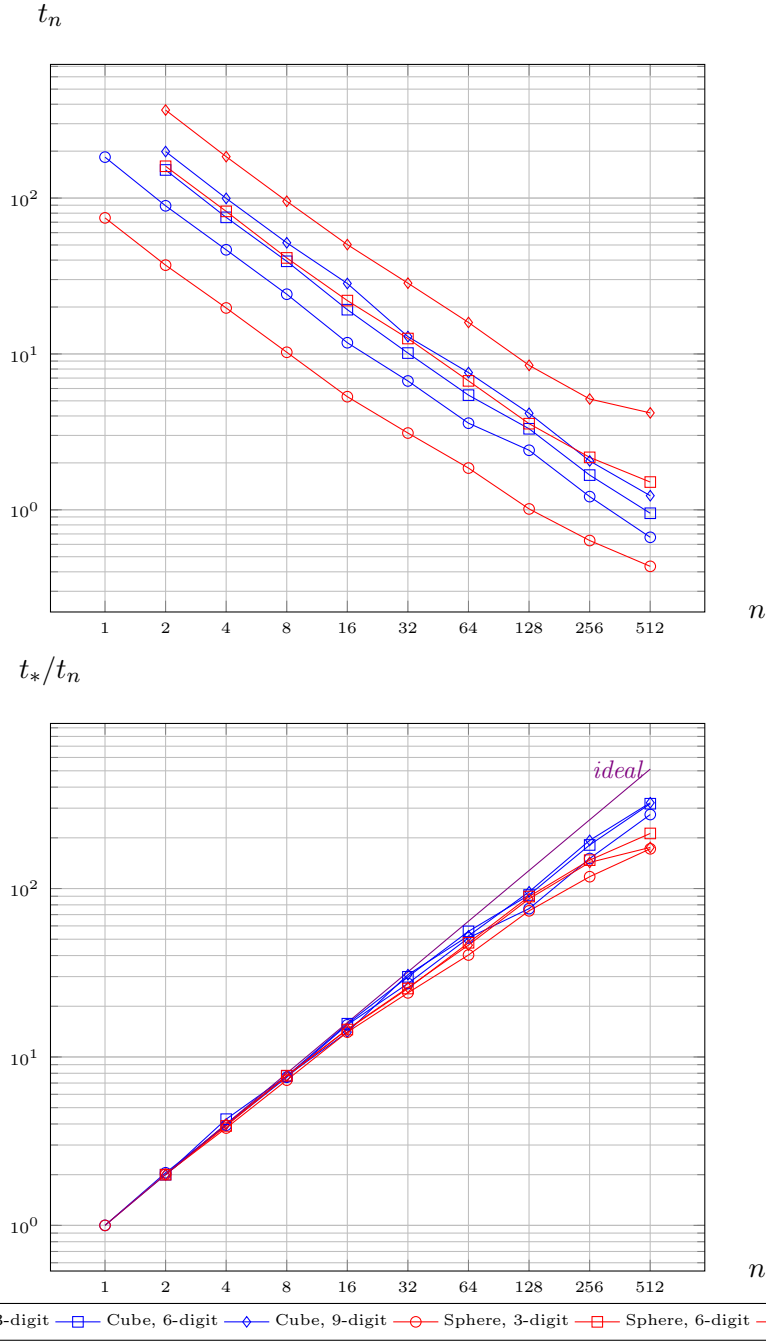
\begin{figure} 
\centering
\begin{tikzpicture}
  \begin{loglogaxis}[     
      ylabel = {\small $t_n$},  
      xlabel = {\small $n$}, 
      every axis x label/.style={at={(ticklabel* cs:1.05)}, anchor=west,}, 
      every axis y label/.style={at={(ticklabel* cs:1.05)}, anchor=south,}, 
      width = 0.75\textwidth, 
      log basis x = 2, 
      xticklabel=\pgfmathparse{2^\tick}\pgfmathprintnumber{\pgfmathresult},
      tick label style={font=\tiny}, 
      grid = both, 
      legend cell align=left, 
      legend entries = {Cube, 3-digit\\
        Cube, 6-digit\\
        Cube, 9-digit\\
        Sphere, 3-digit\\
        Sphere, 6-digit\\
        Sphere, 9-digit\\
      }, 
      legend columns= -1, 
      legend style={font={\tiny}, inner xsep=5pt}, 
      legend to name=scale,
    ]
    \addplot [blue, mark=o] coordinates {
      (1, 183.11200)
      (2, 89.20330)
      (4, 46.57120)
      (8, 24.16110)
      (16, 11.79870)
      (32, 6.70775)
      (64, 3.59531)
      (128, 2.40989)
      (256, 1.21489)
      (512, 0.66593)
    };
    \addplot [blue, mark=square] coordinates {
      (2, 151.47200)
      (4, 75.22500)
      (8, 39.30250)
      (16, 19.23010)
      (32, 10.12820)
      (64, 5.43984)
      (128, 3.31019)
      (256, 1.66979)
      (512, 0.95014)
    }; 
    \addplot [blue, mark=diamond] coordinates {
      (2, 198.8320)
      (4, 99.5396)
      (8, 51.6660)
      (16, 28.3114)
      (32, 12.9156)
      (64, 7.5951)
      (128, 4.1597)
      (256, 2.0579)
      (512, 1.2310)
    }; 
    \addplot [red, mark=o] coordinates {
      (1, 74.64550)
      (2, 37.11930)
      (4, 19.72390)
      (8, 10.24870)
      (16, 5.31695)
      (32, 3.10811)
      (64, 1.85233)
      (128, 1.01258)
      (256, 0.63504)
      (512, 0.43366)
    }; 
    \addplot [red, mark=square] coordinates {
      (2, 159.9720)
      (4, 82.1561)
      (8, 41.3027)
      (16, 21.9918)
      (32, 12.5556)
      (64, 6.7166)
      (128, 3.5695)
      (256, 2.1700)
      (512, 1.5067)
    };
    \addplot [red, mark=diamond] coordinates {
      (2, 367.1)
      (4, 184.4090)
      (8, 95.3989)
      (16, 50.2754)
      (32, 28.4465)
      (64, 15.9185)
      (128, 8.4506)
      (256, 5.1385)
      (512, 4.1848)
    }; 
  \end{loglogaxis}
\end{tikzpicture} 
~
\begin{tikzpicture}
  \begin{loglogaxis}[     
      ylabel = {\small $t_*/t_n$},  
      xlabel = {\small $n$}, 
      every axis x label/.style={at={(ticklabel* cs:1.05)}, anchor=west,}, 
      every axis y label/.style={at={(ticklabel* cs:1.05)}, anchor=south,}, 
      width = 0.75\textwidth, 
      log basis x = 2, 
      xticklabel=\pgfmathparse{2^\tick}\pgfmathprintnumber{\pgfmathresult},
      tick label style={font=\tiny}, 
      grid = both, 
    ]
    \addplot [blue, mark=o] coordinates {
      (1, 1.0000)
      (2, 2.0527)
      (4, 3.9319)
      (8, 7.5788)
      (16, 15.5197)
      (32, 27.2986)
      (64, 50.9308)
      (128, 75.9836)
      (256, 150.7231)
      (512, 274.9731)
    };
    \addplot [blue, mark=square] coordinates {
      (2, 2.0000)
      (4, 4.272)
      (8, 7.7080)
      (16, 15.7536)
      (32, 29.9109)
      (64, 55.6899)
      (128, 91.5186)
      (256, 181.4264)
      (512, 318.8411)
    }; 
    \addplot [blue, mark=diamond] coordinates {
      (2, 2.0000)
      (4, 3.9950)
      (8, 7.6968)
      (16, 14.0461)
      (32, 30.7894)
      (64, 52.3580)
      (128, 95.5990)
      (256, 193.2378)
      (512, 323.0362)
    }; 
    \addplot [red, mark=o] coordinates {
      (1,1)
      (2, 2.0110)
      (4, 3.7845)
      (8, 7.2834)
      (16, 14.039)
      (32, 24.016)
      (64, 40.298)
      (128, 73.718)
      (256, 117.54)
      (512, 172.13)
    }; 
    \addplot [red, mark=square] coordinates {
      (2, 2)
      (4, 3.8943)
      (8, 7.7463)
      (16, 14.548)
      (32, 25.482)
      (64, 47.634)
      (128, 89.633)
      (256, 147.44)
      (512, 212.35)
    };
    \addplot [red, mark=diamond] coordinates {
      (2, 2)
      (4, 3.9814)
      (8, 7.6961)
      (16, 14.604)
      (32, 25.810)
      (64, 46.122)
      (128, 86.882)
      (256, 142.88)
      (512, 175.45)
    }; 
    \addplot [forget plot, samples=20, domain=1:512, color=violet] {x}
    node [left, scale=0.8] {{\it ideal}};
  \end{loglogaxis}
\end{tikzpicture} 
\ref{scale}
\caption{The time to completion $t_n$ (top) and speed relative 
$t_*/t_n$ (bottom) as a function of locality cores $n$.  Each locality has 24
  cores and 64 GB of RAM. For three digits of accuracy, $t_* = t_1$. For six
  and nine digits of accuracy, $t_* =t_2$. The problem size for cube
  distribution is 15 million and the problem size for sphere distribution is 8
  million. }
\label{fig:rpyfmm} 
\end{figure} 

\section{Conclusion}
\label{sec:conclusion}
In this paper, we present RPYFMM, a parallel adaptive fast multipole method
(FMM) software package on shared and distributed memory computers for the
Rotne-Prager-Yamakawa tensor in biomolecular simulations. RPYFMM decomposes the
RPY tensor into a linear combination of Laplace potentials, which are evaluated
using the advanced FMM~\cite{greengard1997new}. RPYFMM is an essential building
block to enable full cell dynamics simulation, which is within reach in the near
term as demonstrated by our numerical results. 

The performance of RPYFMM can be further improved depending on the application
context and underlying architecture. First, future version of DASHMM library
will support heterogeneous memory architecture. As a result, the near field and
possibly other part of the computation can be offloaded to the accelerators.
Second, the FMM for the Laplace potential in RPYFMM is based on spherical
harmonics and exponential expansions. Spherical harmonic expansions are the
orthogonal basis functions for the Laplace operator for a unit sphere and the
number of expansion terms is optimal for spherical geometry. However, compared
with polynomial expansion based approach
\cite{yokota2009fast,yokota2013petascale}, the benefits might not be obvious,
especially at lower accuracy requirement, because polynomials can be evaluated
much more efficiently than spherical harmonics and exponential functions. Future
version of RPYFMM will either have a more tuned evaluations of harmonic and
exponential expansions or will internally switch to a polynomial based
approach when the accuracy requirement is low.

From a broader perspective, authors of this paper are also working on several
closely related research projects in solving (\ref{eq:Ermak}). Some examples
include introducing more realistic HI models by adding the rotational motions of
the beads; accelerating the simulations when rigid body assumptions are valid
(the relative locations of the atoms in portions of the molecules are fixed in
time); developing efficient mesh generation tools for dynamic simnulations;
developing parallel and more efficient time integration schemes; developing
better preconditioners to further reduce the number of iterations in the
iterative schemes when solving the dynamic equations. Research results along
these directions will be discussed in the future.

\section*{Acknowledgments} 
The authors gratefully acknowledge the inspiring discussions with Profs. David
Keyes and Rio Yokota on different parallelization strategies for our solver. 
BZ was supported in part by National Science Foundation grant number
ACI-1440396. GH was supported in part by National Institute of Health
grant number GM 31749.
This research was supported in part by Lilly Endowment, Inc.,
through its support for the Indiana University Pervasive Technology Institute. 
Part of the work was finished when JH was a visiting professor at the King 
Abdullah University of Science and Technology. 

\bibliographystyle{elsarticle-num}
\biboptions{sort&compress}
\bibliography{huangbib}

\begin{thebibliography}{10}
\expandafter\ifx\csname url\endcsname\relax
  \def\url#1{\texttt{#1}}\fi
\expandafter\ifx\csname urlprefix\endcsname\relax\def\urlprefix{URL }\fi
\expandafter\ifx\csname href\endcsname\relax
  \def\href#1#2{#2} \def\path#1{#1}\fi

\bibitem{greengard1997new}
L.~Greengard, V.~Rokhlin, {A new version of the fast multipole method for the
  Laplace equation in three dimensions}, Act. Num. 6 (1997) 229--269.

\bibitem{zhang2016dashmm}
J.~DeBuhr, B.~Zhang, A.~Tsueda, V.~Tilstra-Smith, T.~Sterling, {DASHMM: Dynamic
  Adaptive System for Hierarchical Multipole Methods}, {Comm. Comput. Phys.} 20
  (2016) 1106--1126.

\bibitem{dashmmrev}
J.~DeBuhr, B.~Zhang, T.~Sterling, {Revision of DASHMM: Dynamic Adaptive System
  for Hierarchical Multipole Methods}, {Comm. Comput. Phys.} 23 (2018)
  296--314.

\bibitem{Tanaka2000}
H.~Tanaka, T.~Araki, Simulation method of colloidal suspensions with
  hydrodynamic interactions: Fluid particle dynamics, Phys. Rev. Lett. 85
  (2000) 1338.

\bibitem{Ando2013}
T.~Ando, J.~Skolnick, On the importance of hydrodynamic interactions in lipid
  membrane formation, Biophys. J. 104 (2013) 96--105.

\bibitem{wang2013assessing}
N.~Wang, G.~A. Huber, J.~A. McCammon, Assessing the two-body diffusion tensor
  calculated by the bead models, J. Chem. Phys. 138~(20) (2013) 204117.

\bibitem{Ermak1978}
D.~L. Ermak, J.~A. McCammon, Brownian dynamics with hydrodynamic interactions,
  J. Chem. Phys. 69 (1978) 1352.

\bibitem{batchelor1976brownian}
G.~Batchelor, Brownian diffusion of particles with hydrodynamic interaction, J.
  Fluid Mech. 74~(01) (1976) 1--29.

\bibitem{wajnryb2013generalization}
E.~Wajnryb, K.~A. Mizerski, P.~J. Zuk, P.~Szymczak, {Generalization of the
  Rotne-Prager-Yamakawa mobility and shear disturbance tensors}, Journal of
  Fluid Mechanics 731 (2013) R3.

\bibitem{BoAFMPB}
B.~Zhang, J.~DeBuhr, D.~Niedzielski, S.~Mayolo, B.~Lu, T.~Sterling, {DASHMM
  accelerated adaptive Fast Multipole Poisson-Boltzmann solver on distributed
  memory architecture}, arXiv preprint arXiv:1710.06316.

\bibitem{Liu2014}
X.~Liu, E.~Chow, {Large-scale hydrodynamic Brownian simulations on multicore
  and manycore architectures}, in: {IPDPS}, 2014.

\bibitem{Fixman1986}
M.~Fixman, {Construction of Langevin forces in the simulation of hydrodynamic
  interaction}, Marcromolecules 19 (1986) 1204--1207.

\bibitem{Banchio2003}
A.~J. Banchio, J.~F. Brady, {Accelerated Stokesian dynamics: Brownian motion},
  J. Chem. Phys. 118 (2003) 10323.

\bibitem{geyer}
T.~Geyer, U.~Winter, An {$O(N^2)$} approximation for hydrodynamic interactions
  in {B}rownian dynamics simulations, {J. Chem. Phys.} 130 (2009) 114905.

\bibitem{Ando2012}
T.~Ando, E.~Chow, Y.~Saad, J.~Skolnick, {Krylov subspace methods for computing
  hydrodynamic interactions in Brownian dynamics simulations}, J. Chem. Phys.
  137 (2012) 064106.

\bibitem{liang2013fast}
Z.~Liang, Z.~Gimbutas, L.~Greengard, J.~Huang, S.~Jiang, {A fast multipole
  method for the Rotne-Prager-Yamakawa tensor and its applications}, Journal of
  Computational Physics 234 (2013) 133--139.

\bibitem{tornberg2008fast}
A.-K. Tornberg, L.~Greengard, {A fast multipole method for the
  three-dimensional Stokes equations}, Journal of Computational Physics 227~(3)
  (2008) 1613--1619.

\bibitem{jiang2013fast}
S.~Jiang, Z.~Liang, J.~Huang, {A fast algorithm for Brownian dynamics
  simulation with hydrodynamic interactions}, Mathematics of Computation
  82~(283) (2013) 1631--1645.

\bibitem{kulkarni16}
A.~Kulkarni, L.~Dalessandro, E.~Kissel, A.~Lumsdaine, T.~Sterling, M.~Swany,
  Network-managed virtual global address space for message-driven runtimes, in:
  Proceedings of the 25th International Symposium on High Performance Parallel
  and Distributed Computing (HPDC 2016), 2016.

\bibitem{kissel16}
E.~Kissel, M.~Swany, Photon: Remote memory access middleware for
  high-performance runtime systems, in: In Proceedings of the 1st Emerging
  Parallel and Distributed Runtime Systems and Middleware (IPDRM) Workshop,
  2016.

\bibitem{Cimini:2016ab}
M.~Cimini, J.~G. Siek, T.~Sterling, {The Semantics of ParalleX, v1.0}, Tech.
  Rep. TR726, School of Informatics and Computing, Indiana University
  Bloomington (May 2016).

\bibitem{Warren1992}
M.~Warren, J.~Salmon, {Astrophysical N-body simulation using hierarchical tree
  data structures}, in: {SC 92': Proceedings of the 1992 ACM/IEEE Conference on
  Supercomputing}, 1992.

\bibitem{Ying2003}
L.~Ying, G.~Biros, D.~Zorin, H.~Langston, A new parallel kernel-independent
  fast multipole method, in: {SC '03: Proceedings of the 2003 ACM/IEEE
  Conference on Supercomputing}, 2003.

\bibitem{Kurzak2005a}
J.~Kurzak, B.~M. Pettitt, Communications overlapping in fast multipole particle
  dynamics methods, {J. Comput. Phys.} 203 (2005) 731--743.

\bibitem{Barnes1986}
J.~{Barnes}, P.~{Hut}, {A hierarchical O(N log N) force-calculation algorithm},
  Nature 324 (1986) 446--449.
\newblock \href {http://dx.doi.org/10.1038/324446a0}
  {\path{doi:10.1038/324446a0}}.

\bibitem{greengard1987fast}
L.~Greengard, V.~Rokhlin, A fast algorithm for particle simulations, J. Comput.
  Phys. 73~(2) (1987) 325--348.

\bibitem{Carrier1988}
J.~Carrier, L.~Greengard, V.~Rokhlin, A fast adaptive multipole algorithm for
  particle simulations, {SIAM J. Sci. Stat. Comp.} 9 (1988) 669--686.

\bibitem{Cheng1999}
H.~Cheng, L.~Greengard, V.~Rokhlin, A fast adaptive multipole algorithm in
  three dimensions, {J. Comput. Phys.} 155 (1999) 468--498.

\bibitem{yokota2009fast}
R.~Yokota, T.~Narumi, R.~Sakamaki, S.~Kameoka, S.~Obi, K.~Yasuoka, {Fast
  multipole methods on a cluster of GPUs for the meshless simulation of
  turbulence}, Computer Physics Communications 180~(11) (2009) 2066--2078.

\bibitem{yokota2013petascale}
R.~Yokota, L.~A. Barba, T.~Narumi, K.~Yasuoka, {Petascale turbulence simulation
  using a highly parallel fast multipole method on GPUs}, Computer Physics
  Communications 184~(3) (2013) 445--455.

\end{thebibliography}

\end{document}